\newtheorem{theorem}{Theorem}[section]
\newtheorem{lemma}[theorem]{Lemma}
\newtheorem{proposition}[theorem]{Proposition}
\newtheorem{corollary}[theorem]{Corollary}
\newtheorem{question}[theorem]{Question}
\newtheorem{example}[theorem]{Example}
\theoremstyle{definition}\newtheorem{definition}[theorem]{Definition}}
\theoremstyle{definition}\newtheorem{remark}[theorem]{Remark}}
\newtheorem*{thmmp}{Proposition MP}
\newtheorem*{thmis}{Proposition I}
\numberwithin{equation}{section}
\def\C{{\mathbb C}}
\def\N{{\mathbb N}}
\def\Z{{\mathbb Z}}
\def\R{{\mathbb R}}
\def\T{{\mathbb T}}
\def\epsilon{\varepsilon}
\def\phi{\varphi}
\def\leq{\leqslant}
\def\geq{\geqslant}
\title{On Pauli Pairs}
\author{Stanislav Shkarin}
\date{}
\begin{document}

\maketitle

\begin{abstract} The state of a system in classical mechanics can be uniquely reconstructed if we know the positions and the momenta of all its parts. In 1958 Pauli has conjectured that the same holds for quantum mechanical systems. The conjecture turned out to be wrong. In this paper we provide a new set of examples of Pauli pairs, being the pairs of quantum states indistinguishable by measuring the spatial location and momentum. In particular, we construct a new set of spatially localized Pauli pairs.
\end{abstract}
\small 

\noindent{\bf Keywords:} \ \ Pauli's problem, reconstruction of a quantum state, measuring location and momentum, quantum mechanics
\normalsize

\section{Introduction} 

In classical mechanics, one can determine the state of a system by measuring the position and the momentum of all its components. But what about the quantum mechanical systems? 

In his famous 1958 article \cite{pau} Paili asked a question whether the wave function of a system of spinless particles is determined up to a constant phase factor by the position and momentum probability densities. He conjectured that the answer to this question is affirmative. A number of authors, see, for instance, \cite{mo,mope,ism} have shown that this conjecture fails. The main purpose of this paper is to investigate to which extend it does. 

Recall that a state of a quantum mechanical system of spinless particles is given by a wave function $f$, which is a norm 1 vector in the complex Hilbert space $L^2(\R^n)$ determined uniquely up to a constant phase factor (=up to multiplying by $z\in\C$ with $|z|=1$). The density of the probability distribution of the positions of the particles in the system is given by $|f|^2$, while the probability distribution of their momenta has the density $(2\pi)^{-n}|\widehat f|^2$, where $\widehat f$ is the Fourier transform of $f$:
$$
\widehat f(x)=\int_{\R^n}e^{-i\langle x,y\rangle}f(y)\,dy,\ \ \ \text{where $\langle x,y\rangle=x_1y_1+{\dots}+x_ny_n$.}
$$
Measurements of positions and momenta can provide us only with the information about these distributions. Thus the two states $f$ and $g$ satisfying $|f|=|g|$ and $|\widehat f|=|\widehat g|$ (as functions) are physically indistinguishable. It would be natural to expect that such states must indeed be the same state. The above mentioned conjecture of Pauli says that it is possible to reconstruct $f\in L^2(\R^n)$ up to a scalar multiple knowing only $|f|$ and $|\widehat f|$. Rigorously speaking, his conjecture reads as follows.

\medskip

\noindent {\bf The Pauli Conjecture.} \ If $f,g\in L^2(\R^n)$ are such that $|f|=|g|$ and $|\widehat f|=|\widehat g|$ almost everywhere, then there is $z\in\C$ with $|z|=1$ such that $f=zg$.

\smallskip

Unfortunately, this conjecture fails. That is, there are pairs of functions (states) $f,g\in L^2(\R^n)$ such that $|f|=|g|$ and $|\widehat f|=|\widehat g|$ (almost everywhere) and $f$ and $g$ are linearly independent in $L^2(\R^n)$. We shall call them {\it Pauli pairs}. It is worth mentioning that the Pauli Conjecture turns out to be true in general position (in a certain sense) \cite{Jko}. The question of algorithmic reconstruction of $f$ given $|f|$ and $|\widehat f|$ was addressed in \cite{lokr}. Finally, we would like to mention the paper \cite{beism}, which introduces and studies an abstract operator theoretic generalization of the Pauli question.

It seems it was Moroz \cite{mo}, who first observed that certain symmetries provide Pauli pairs. For instance, if $f\in L^2(\R^n)$ is even, then the pair $(f,\overline{f})$ ($\overline{f}$ is the pointwise complex conjugate of $f$) satisfies $|f|=|\overline{f}|$ and $|\widehat{f}|=|\widehat{\overline{f}}|$. Thus $(f,\overline{f})$ is a Pauli pair provided $f$ is not a (complex) scalar multiple of a real-valued function. 

Note that the case $n=1$ is the most challenging one for producing Pauli pairs. For one, the isometry group of $\R^n$ becomes richer when $n$ grows and the symmetry based examples become easier to come by \cite{ism}. Another observation is that if $f_1,f_2\in L^2(\R^m)$ and $g_1,g_2\in L^2(\R^k)$ satisfy $|f_1|=|f_2|$, $|g_1|=|g_2|$, $|\widehat{f}_1|=|\widehat{f}_2|$ and $|\widehat{g}_1|=|\widehat{g}_2|$, then $|\psi_1|=|\psi_2|$ and $|\widehat{\psi}_1|=|\widehat{\psi}_2|$, where $\psi_1,\psi_2\in L^2(\R^{m+k})$ are defined by $\psi_1(t,x)=f_1(t)g_1(x)$ and $\psi_2(t,x)=f_2(t)g_2(x)$. Thus Pauli pairs for smaller $n$ generate Pauli pairs for bigger $n$.

Throughout this paper, for the sake of brevity, we shall drop the 'almost everywhere' refrain. When we write $f=g$ for $f$ and $g$ being Lebesgue measurable functions, we always mean $f=g$ almost everywhere. Similarly, we say that $f$ is non-constant if there is no constant to which $f$ equals almost everywhere, we say that $f:\R\to \C$  is $T$-periodic with $T>0$ if $f(x+T)=f(x)$ for almost all $x\in\R$ etc. Everywhere below, we shall use the fairly standard notation: \begin{align*}
&\text{$\T=\{z\in\C:|z|=1\}$, \ $\R_+=[0,\infty)$,}
\\
&\text{$\N=\{1,2,\dots\}$ is the set of positive integers and} 
\\
&\text{$\Z$ is the set of all integers.} 
\end{align*}
In this article, we produce new sets of Pauli pairs in the case $n=1$. We start by reminding the following observation of Moroz and Perelomov \cite{mope}.

\begin{thmmp} Let $\rho:\R\to\R_+$ and $\phi:\R\to\R$ be such that $\rho\in L^2(\R)$ and $\phi$ is Borel measurable. Assume also that there is $a\in\R$ such that $\rho(x)=\rho(a-x)$ on $\R$. Then
$|f_1|=|f_2|$ and $|\widehat{f}_1|=|\widehat{f}_2|$, where $f_1,f_2\in L^2(\R)$ are defined by the formulas $f_1(x)=\rho(x)e^{i\phi(x)}$ and $f_2(x)=\rho(x)e^{-i\phi(a-x)}$. In particular, if $e^{i(\phi(x)+\phi(a-x))}$ is non-constant, then $(f_1,f_2)$ is a Pauli pair.
\end{thmmp}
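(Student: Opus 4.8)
The plan is to reduce the whole statement to two elementary identities for the Fourier transform, used together with the reflection symmetry of $\rho$. First, $|f_1|=|f_2|$ is immediate: $|f_1(x)|=|f_2(x)|=\rho(x)$ pointwise, and since $\rho\in L^2(\R)$ both $f_1$ and $f_2$ belong to $L^2(\R)$, so that $\widehat{f_1}$ and $\widehat{f_2}$ are well-defined as $L^2$-Fourier transforms. The key observation is that the hypothesis $\rho(x)=\rho(a-x)$ lets us write $f_2$ directly in terms of $f_1$: since $f_1(a-x)=\rho(a-x)e^{i\phi(a-x)}=\rho(x)e^{i\phi(a-x)}$, we get $f_2(x)=\rho(x)e^{-i\phi(a-x)}=\overline{f_1(a-x)}$. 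Thus, writing $R_a g(x):=g(a-x)$, we have $f_2=\overline{R_a f_1}$.

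Next I would record how the Fourier transform interacts with the two unitary operations $g\mapsto R_a g$ and $g\mapsto\overline g$ on $L^2(\R)$. The substitution $u=a-y$ in the defining integral gives $\widehat{R_a g}(x)=e^{-iax}\widehat g(-x)$, and conjugating the integral gives $\widehat{\overline g}(x)=\overline{\widehat g(-x)}$; each identity is checked by a one-line computation for $g\in L^1(\R)\cap L^2(\R)$ and then extends to all of $L^2(\R)$ because $R_a$, complex conjugation, and modulation by $e^{-iax}$ are isometries and the Fourier transform is continuous on $L^2(\R)$. Combining these with $f_2=\overline{R_a f_1}$ yields $\widehat{f_2}(x)=\overline{\widehat{R_a f_1}(-x)}=\overline{e^{iax}\widehat{f_1}(x)}=e^{-iax}\overline{\widehat{f_1}(x)}$, so that $|\widehat{f_2}(x)|=|\widehat{f_1}(x)|$ for every $x$. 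This proves the main assertion.

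Finally, for the ``in particular'' clause I would argue by contraposition. If $(f_1,f_2)$ is not a Pauli pair then $f_1$ and $f_2$ are linearly dependent in $L^2(\R)$; setting aside the degenerate case $\rho=0$ a.e.\ (where $f_1=f_2=0$ and the claim is vacuous), and using $|f_1|=|f_2|=\rho$, linear dependence forces $f_2=zf_1$ for some $z\in\T$. Substituting the formulas and cancelling $\rho$ on the set $\{\rho>0\}$ gives $e^{-i\phi(a-x)}=ze^{i\phi(x)}$, i.e.\ $e^{i(\phi(x)+\phi(a-x))}=\overline z$ is constant on $\{\rho>0\}$, contradicting the hypothesis. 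I expect the only point requiring genuine care to be the transfer of the two Fourier identities from the $L^1$-setting, where they are obvious, to the $L^2$-setting in which the functions actually live; the rest is bookkeeping.
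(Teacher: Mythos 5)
Your proof is correct and is essentially the standard argument behind Proposition~MP (which the paper states without proof, citing Moroz--Perelomov): you write $f_2(x)=\overline{f_1(a-x)}$ and track the Fourier transform under reflection and conjugation to get $\widehat{f}_2(x)=e^{-iax}\,\overline{\widehat{f}_1(x)}$, which is exactly the identity the paper itself invokes later when introducing MP$^2$-pairs. The contrapositive argument for the ``in particular'' clause is also fine, with the harmless caveat (which your own argument makes explicit) that the non-constancy of $e^{i(\phi(x)+\phi(a-x))}$ is really needed only on the set $\{\rho>0\}$.
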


This proposition goes as far as one can get in generalizing the above mentioned even function idea in the case $n=1$.
Moroz and Perelomov \cite{mope} conjectured that all Pauli pairs in $L^2(\R)$ are given by Proposition~MP. Ismagilov \cite{ism} proved them wrong. Namely, he proved the following fact. The proof is nice and short, so we reproduce it in a slightly different form for the sake of the reader's convenience.

\begin{thmis} Let $T>0$, $\phi:\R\to\T$ be a $T$-periodic Borel measurable function and $g\in L^2(\R)$ be a non-zero function supported on an interval of the length $T/2\pi$. For each $a\in\R$, consider the function $f_a(x)=\widehat g(x)\phi(x-a)$. Then the functions $|f_a|$ and $|\widehat f_a|$ do not depend on the choice of $a$.
If additionally, $\phi$ is not of the form $\phi(x)=e^{iTk(x-d)/(2\pi)}$
with some $d\in\R$ and $k\in\Z$, then there is $c=c(\phi)>0$ such that $(f_a,f_b)$ is a Pauli pair whenever $0<|a-b|<c$.
\end{thmis}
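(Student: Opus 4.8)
The two invariance statements follow once $\phi$ is expanded in a Fourier series. Since $\phi$ is $\T$-valued, $|f_a|=|\widehat g|$ for every $a$, so $|f_a|$ is independent of $a$, and $f_a\neq 0$ because $g\neq 0$ makes the entire function $\widehat g$ nonzero a.e. On the Fourier side I would write $\phi(y)=\sum_{k\in\Z}c_k e^{2\pi iky/T}$ (so $\sum_k|c_k|^2=1$), set $y=x-a$, and use that the Fourier transform turns multiplication of $\widehat g$ by the character $e^{2\pi ikx/T}$ into translation by $2\pi k/T$, together with $\widehat{\widehat g}(t)=2\pi g(-t)$, to arrive at
\[
\widehat{f_a}(\xi)=2\pi\sum_{k\in\Z}c_k\,e^{-2\pi ika/T}\,g\!\left(\tfrac{2\pi k}{T}-\xi\right).
\]
The sole purpose of the hypothesis on the length of $\supp g$ is to make the translates $\tfrac{2\pi k}{T}-\supp g$, $k\in\Z$, pairwise disjoint up to a null set; then the right-hand side is an orthogonal sum in $L^2(\R)$ and for a.e.\ $\xi$ only one summand is nonzero, so $|\widehat{f_a}(\xi)|^2=(2\pi)^2\sum_k|c_k|^2\,|g(\tfrac{2\pi k}{T}-\xi)|^2$, which does not involve $a$.

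The one point that genuinely needs care — and the only real obstacle I foresee — is justifying the above computation, since $\phi$ is merely bounded and its Fourier partial sums need not converge conveniently. I would circumvent this by running the argument with the Fej\'er means $\sigma_N\phi$ instead: these are trigonometric polynomials, bounded in modulus by $\|\phi\|_\infty=1$, with $\sigma_N\phi\to\phi$ almost everywhere and in $L^2$ of a period (Fej\'er's theorem). Since $g\in L^2$ has compact support we have $\widehat g\in L^2(\R)\cap L^\infty(\R)$, so dominated convergence gives $\widehat g\cdot(\sigma_N\phi)(\cdot-a)\to f_a$ in $L^2(\R)$; the displayed formula with $\sigma_N\phi$ in place of $\phi$ holds trivially for these finite sums, and letting $N\to\infty$ — the Fej\'er weights tending to $1$ and bounded by $1$ — recovers it for $\phi$ and hence the modulus identity.

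For the last assertion I would argue by contradiction. If $(f_a,f_b)$ is not a Pauli pair, then $f_b=\lambda f_a$ for some $\lambda\in\C$; comparing norms forces $|\lambda|=1$, and dividing by the a.e.\ nonzero function $\widehat g$ and shifting the variable gives $\phi(\cdot-h)=\lambda\phi$ a.e., where $h:=b-a$. Consider $\Lambda:=\{h\in\R:\phi(\cdot-h)=\lambda\phi\text{ for some }\lambda\in\T\}$; one checks immediately that $\Lambda$ is a subgroup of $(\R,+)$ containing $T\Z$. Comparing $k$-th Fourier coefficients in $\phi(\cdot-h)=\lambda\phi$ yields $e^{-2\pi ikh/T}c_k=\lambda c_k$ for all $k$, so $\lambda=e^{-2\pi ikh/T}$ whenever $c_k\neq 0$. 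If $\phi$ had two nonzero Fourier coefficients $c_{k_1},c_{k_2}$ with $k_1\neq k_2$, this would force $e^{-2\pi i(k_1-k_2)h/T}=1$ for every $h\in\Lambda$, i.e.\ $\Lambda\subseteq\frac{T}{|k_1-k_2|}\Z$, so $\Lambda$ would be a discrete, and (as it contains $T\Z$) nontrivial, subgroup of $\R$. The only alternative is that $\phi$ has a single nonzero Fourier coefficient, i.e.\ $\phi$ coincides a.e.\ with a unimodular constant times one character $x\mapsto e^{2\pi ikx/T}$ — precisely the form excluded in the statement (and in that case $f_a$ and $f_b$ are proportional for all $a,b$, so no Pauli pair arises anyway). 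Thus, under the hypothesis, $\Lambda$ is a nontrivial discrete subgroup of $\R$; taking $c=c(\phi)$ to be its least positive element (so $0<c\leq T$) we obtain $b-a\notin\Lambda$ whenever $0<|a-b|<c$, hence $f_a$ and $f_b$ are linearly independent, and together with the first two parts this makes $(f_a,f_b)$ a Pauli pair.
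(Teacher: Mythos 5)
Your proposal is correct and follows essentially the same route as the paper: expand $\phi$ in its Fourier series, observe that $|f_a|=|\widehat g|$ and that $\widehat f_a$ is a sum of translates of $g$ with pairwise disjoint supports (so $|\widehat f_a|$ is $a$-independent), and then rule out $\phi(\cdot-h)=\lambda\phi$ for small $h\neq0$ unless $\phi$ is a single character. The only difference is that you supply the details the paper leaves implicit — the Fej\'er-means justification of the termwise computation and the quasi-period subgroup argument for the ``easy exercise'' — which is a welcome, not a divergent, elaboration.
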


\begin{proof} Without loss of generality, we may assume that $T=2\pi$. Since  $\phi$ is a $2\pi$-periodic function and belongs to $L^2[0,2\pi]$, we can write its Fourier series expansion 
$$
\phi(x)=\sum_{k=-\infty}^\infty c_ke^{ikx},
$$
where the Fourier coefficients $c_k\in\C$ satisfy $\sum\limits_{k=-\infty}^\infty|c_k|^2<\infty$. An easy computation yields
$$
f_a(x)=\sum_{k=-\infty}^\infty c_ke^{-ika}e^{ikx}\widehat g(x)\ \ \text{and}\ \
\widehat{f}_a(y)=2\pi\sum_{k=-\infty}^\infty c_ke^{-ika}g(-k-y).
$$
Since the supports of the functions $y\mapsto g(-k-y)$ do not intersect, we get
$$
|\widehat{f}_a(y)|=2\pi \sum_{k=-\infty}^\infty |c_k||g(-k-y)|.
$$
Thus $|\widehat{f}_a|$ does not depend on $a$. Obviously $|f_a|=|\widehat g|$ also does not depend on $a$.

Finally, it is an easy exercise to see that if the measurable function $\phi$ is not of the form $\phi(x)=e^{ik(x-d)}$ for some $d\in\R$ and $k\in\Z$, then it is impossible for $\phi$ to be a scalar multiple of each of $\phi(\cdot-c_n)$ with $\{c_n\}$ being a sequence of positive numbers converging to $0$. Hence there is $c>0$ such that the two functions $x\mapsto \phi(x-a)$ and $x\mapsto \phi(x-b)$ are linearly independent provided $0<|a-b|<c$. Since $\widehat g$ is non-zero and analytic on $\R$, $f_a$ and $f_b$ are linearly independent. Thus $(f_a,f_b)$ is a Pauli pair whenever $0<|a-b|<c$.
\end{proof}

Recall that the Schwartz space $S(\R)$ consists of $f\in C^\infty(\R)$ such that 
$$
\text{$p_{n,k}(f)=\sup\{(1+|x|)^k|f^{(n)}(x)|:x\in\R\}<\infty$ for every non-negative integers $n$ and $k$}. 
$$
The norms $p_{n,k}$ define a Fr\'echet space topology on $S(\R)$. It is well-known and easy to show that $S(\R)$ is a dense in $L^2(\R)$ linear subspace invariant for both the Fourier transform and its inverse.

\begin{remark}\label{rr1} It is easy to see that if under the conditions of Proposition~I, both $\phi$ and $g$ are infinitely differentiable, then each $f_a$ belongs to $S(\R)$ and therefore each $\widehat{f}_a$ belongs to $S(\R)$. If additionally $\phi$ is real-analytic, then so is each $f_a$. On the other hand $\widehat{f}_a$ is never real-analytic. Note also that Proposition~I provides a one-parametric family of functions in $L^2(\R)$ each pair of which is a Pauli pair.

Another point is that since $\widehat g$ is non-zero and analytic and $|\phi|=1$, the support of each $f_a$ in Proposition~I is unbounded (it is the entire real line actually). Next, since there are no trigonometric polynomials $p$ apart from $p(x)=e^{ik(x-a)}$ with $k\in\Z$ and $a\in\R$ satisfying $|p|\equiv1$, the support of each $\widehat{f}_a$ in Proposition~I is also unbounded. Indeed, the last observation forbids $\phi$ to be a trigonometric polynomial and therefore infinitely many of the Fourier coefficients $c_k$ of $\phi$ are non-zero. The explicit expression for $\widehat {f}_a$ in the proof of Proposition~I immediately entails the unboundedness of the support of $\widehat{f}_a$.
\end{remark}

\subsection{Main results} 

The first result of this paper shows that we can go much further than Proposition~I suggests. 

\begin{definition}\label{uzdset} We say that a set $S\subset L^2(\R)$ is an
{\it ultimate zero divisor set} (an {\it UZD-set} for short) if the cardinality of $S$ is at least 2, each $f\in S$ is a non-zero element of $L^2(\R)$ and $fg=\widehat f\widehat g=0$ for every distinct $f,g\in S$.
\end{definition}

\begin{theorem}\label{th1} There is a countable infinite UZD-set $S$ in $L^2(\R)$ such that $S\subset S(\R)$.
\end{theorem}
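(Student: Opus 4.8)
The plan is to build functions $f_1,f_2,\dots\in S(\R)$ whose supports are pairwise disjoint and whose Fourier transforms also have pairwise disjoint supports; then $S=\{f_n:n\in\N\}$ is a countably infinite UZD-set contained in $S(\R)$ (and, incidentally, for any $n\neq m$ the functions $f_n+f_m$ and $f_n-f_m$ then form a Pauli pair, so one gets a large supply of Pauli pairs in $S(\R)$ for free). The essential obstacle is that a non-zero element of $L^2(\R)$ --- a fortiori of $S(\R)$ --- cannot be supported on a set of finite measure together with its Fourier transform. So one cannot just place the $f_n$ on disjoint bounded intervals, since then each $\widehat{f_n}$ would be entire, with support all of $\R$. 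The remedy is to make $f_n$ supported on a \emph{thin periodic} set in the spatial variable while at the same time $\widehat{f_n}$ is supported on a thin periodic set in the frequency variable --- in effect, to use functions of the type appearing in Proposition~I but with the unimodular periodic factor replaced by one vanishing on part of each period, and with different factors chosen for different $n$.

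The concrete ansatz I would use is this. Fix pairwise disjoint closed intervals $I_n\subset(0,2\pi)$ and pairwise disjoint closed intervals $J_n\subset(0,1)$ (each bounded interval contains countably many pairwise disjoint closed subintervals, so both families exist). Pick $\omega_n\in C^\infty_c(\R)$ with $\omega_n\geq0$, $\omega_n\not\equiv0$ and $\supp\omega_n\subset\mathrm{int}\,I_n$, and let $\Omega_n(x)=\sum_{j\in\Z}\omega_n(x-2\pi j)$; since $I_n$ is shorter than $2\pi$, at each point at most one term survives, so $\Omega_n$ is a non-zero $2\pi$-periodic $C^\infty$ function with $\supp\Omega_n\subset I_n+2\pi\Z$ and rapidly decaying Fourier coefficients $d_{n,k}$, in particular $\sum_k|d_{n,k}|<\infty$. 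Pick $G_n\in S(\R)$ with $\widehat{G_n}\in C^\infty_c(\R)$, $\widehat{G_n}\not\equiv0$ and $\supp\widehat{G_n}\subset J_n$; such $G_n$ exists because $C^\infty_c(\R)\subset S(\R)$ and the Fourier transform maps $S(\R)$ onto itself. Note that $G_n$ is the inverse Fourier transform of a compactly supported smooth function, hence extends to an entire function, and $G_n\not\equiv0$. Finally set $f_n=\Omega_n G_n$.

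It then remains to verify the required properties. (i) $f_n\in S(\R)$, being the product of the Schwartz function $G_n$ with the $C^\infty$ function $\Omega_n$ all of whose derivatives are bounded (by Leibniz's rule every seminorm $p_{m,k}(f_n)$ is finite). (ii) $f_n\not\equiv0$, because $G_n$ is a non-zero real-analytic function, so its zeros are isolated and it cannot vanish on the non-empty open set $\{\Omega_n\neq0\}$. (iii) $f_nf_m=0$ for $n\neq m$, since $\supp f_n\subset\supp\Omega_n\subset I_n+2\pi\Z$ and the sets $I_n+2\pi\Z$ are pairwise disjoint because the $I_n$ are disjoint subintervals of $(0,2\pi)$. (iv) $\widehat{f_n}\widehat{f_m}=0$ for $n\neq m$: writing $\Omega_n(x)=\sum_kd_{n,k}e^{ikx}$ and using $\sum_k|d_{n,k}|<\infty$ and $G_n\in L^1(\R)$, the series $\sum_kd_{n,k}e^{ik\cdot}G_n$ converges to $f_n$ in $L^1(\R)$, so
$$
\widehat{f_n}(\xi)=\sum_{k\in\Z}d_{n,k}\,\widehat{G_n}(\xi-k),
$$
whence $\supp\widehat{f_n}\subset\overline{\bigcup_{k\in\Z}(\supp\widehat{G_n}+k)}\subset J_n+\Z$, and the sets $J_n+\Z$ are pairwise disjoint because the $J_n$ are disjoint subintervals of $(0,1)$.

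I expect everything after fixing the ansatz to be routine; the only point that needs a word is the interchange of sum and Fourier transform in step (iv), justified by the rapid decay of the Fourier coefficients of $\Omega_n$. The real content is the observation in the first paragraph that both supports of each $f_n$ must be unbounded, which dictates the ``vanishing periodic factor $\times$ band-limited Schwartz function'' form of the construction.
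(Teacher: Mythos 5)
Your construction is correct and is essentially the paper's proof seen from the Fourier-dual side: the paper builds $f_{a,b}(x)=\sum_k\widehat a(k)\,b(x+k)$ with $\widehat f_{a,b}=2\pi\,\widehat b\,a_{\rm per}$, which is exactly your ``periodic cutoff times band-limited Schwartz function'' ansatz $f_n=\Omega_nG_n$ with the roles of space and frequency swapped, and then uses the same disjoint families of intervals $I_n$, $J_n$ to kill the products $f_nf_m$ and $\widehat f_n\widehat f_m$. All your verifications (Schwartz class, nonvanishing via analyticity of $G_n$, disjointness of the periodized supports, and the $L^1$-justified interchange of sum and Fourier transform) are sound, so there is nothing to correct.
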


The above theorem is an enormous source of Pauli pairs.
Indeed, if $\{f_n\}_{n\in\N}$ is a sequence of pairwise distinct non-zero functions in $L^2(\R)$ such that $\{f_n:n\in\N\}$ is a UZD-set and $c=\{c_n\}_{n\in\N}$ is a sequence of positive numbers such that $\sum\limits_{n=1}^\infty c_n^2\|f_n\|^2<\infty$, then for every sequence $w=\{w_n\}_{n\in\N}$ in $\T$, we have
$$
|g_w|=\sum_{n=1}^\infty c_n|f_n|\ \ \text{and}\ \ |\widehat {g}_w|=\sum_{n=1}^\infty c_n|\widehat{f}_n|,\ \ \text{where}\ \ g_w=\sum_{n=1}^\infty w_nc_nf_n\in L^2(\R).
$$
Also, $g_w$ and $g_s$ are linearly dependent precisely when the sequences $w$ and $s$ of elements of $\T$ are proportional. Thus Theorem~\ref{th1} provides a family of pairwise distinct functions smoothly parametrized by infinitely many numbers from $\T$ such that each pair of the family is a Pauli pair. 

\begin{definition}\label{pstate} We say that $f\in L^2(\R^n)$ is a {\it Pauli state} if there is $g\in L^2(\R^n)$ such that $(f,g)$ is a Pauli pair.
\end{definition}

\begin{corollary}\label{psta} There is a closed infinite dimensional subspace $L$ of $L^2(\R)$ such that every non-zero vector in $L$ is a Pauli state. 
\end{corollary}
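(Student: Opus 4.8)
The plan is to read $L$ off the countable infinite UZD-set $S=\{f_n:n\in\N\}\subset S(\R)$ provided by Theorem~\ref{th1}, using a mild enhancement of the phase-modulation observation recorded right after that theorem. First I would normalize: replacing each $f_n$ by $f_n/\|f_n\|$, I may assume $\|f_n\|=1$. Since $f_m,f_n\in S(\R)$ are continuous and $f_mf_n=0$ for $m\ne n$, the open sets $U_m=\{x:f_m(x)\ne 0\}$ are pairwise disjoint, and likewise the open sets $V_m=\{x:\widehat f_m(x)\ne 0\}$ are pairwise disjoint because $\widehat f_m\widehat f_n=0$ for $m\ne n$. In particular $\{f_n\}$ is an orthonormal sequence and $\{\widehat f_n\}$ is an orthogonal sequence of non-zero vectors.

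Next I would put $e_n=f_{2n-1}+f_{2n}$ and $L=\spannn\{e_n:n\in\N\}$. Because the $e_n$ are pairwise orthogonal and non-zero, $L$ is a closed, infinite dimensional subspace of $L^2(\R)$ whose elements are exactly the sums $g=\sum_{n=1}^\infty b_ne_n$ with $\sum_n|b_n|^2<\infty$. Setting $a_{2n-1}=a_{2n}=b_n$, so that $g=\sum_m a_mf_m$, the disjointness of the $U_m$ means that at almost every point at most one summand is non-zero; hence $|g|=\sum_m|a_m|\,|f_m|=\sum_n|b_n|\bigl(|f_{2n-1}|+|f_{2n}|\bigr)$. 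Applying the Fourier transform and arguing in the same way with the $V_m$ gives $|\widehat g|=\sum_n|b_n|\bigl(|\widehat f_{2n-1}|+|\widehat f_{2n}|\bigr)$.

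To finish, fix a non-zero $g=\sum_n b_ne_n\in L$ and set $h=\sum_n b_n\bigl(f_{2n-1}-f_{2n}\bigr)\in L^2(\R)$. The computation above, applied verbatim, yields $|h|=\sum_n|b_n|\bigl(|f_{2n-1}|+|f_{2n}|\bigr)=|g|$ and $|\widehat h|=\sum_n|b_n|\bigl(|\widehat f_{2n-1}|+|\widehat f_{2n}|\bigr)=|\widehat g|$, so $(g,h)$ will be a Pauli pair as soon as $g$ and $h$ are linearly independent. But an identity $h=\lambda g$ with $\lambda\in\C$ would force, upon equating coefficients with respect to the linearly independent family $\{f_m\}$, both $b_n=\lambda b_n$ and $-b_n=\lambda b_n$ for every $n$; picking $n$ with $b_n\ne 0$ (possible since $g\ne 0$) gives $\lambda=1=-1$, which is absurd. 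Thus every non-zero $g\in L$ is a Pauli state.

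I do not expect a real obstacle: the corollary is essentially a repackaging of Theorem~\ref{th1}. The one point worth noting is why the $f_n$ are grouped into the pairs $f_{2n-1}+f_{2n}$ before taking the closed span. If one used $L=\spannn\{f_n\}$ instead, a vector of $L$ concentrated on a single $f_n$ would only admit phase modulations $\sum w_nb_nf_n$ ($w_n\in\T$) that are scalar multiples of it, and it would not be exhibited as a Pauli state by this method; pairing guarantees that every non-zero $g\in L$ genuinely involves two of the $f_m$ with equal coefficients, whereupon the sign flip $f_{2n}\mapsto-f_{2n}$ moves $g$ off its own line. The subsidiary points — the pointwise identities for $|g|$ and $|\widehat g|$ (which follow from the disjointness of the $U_m$ (resp.\ the $V_m$) together with continuity of the $f_m$ (resp.\ the $\widehat f_m$), or from a subsequence-of-partial-sums argument) and the fact that $L$ is closed and infinite dimensional — are routine.
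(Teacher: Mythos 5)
Your proposal is correct and follows essentially the same route as the paper: pair the UZD functions as $f_{2n-1}+f_{2n}$, take the closed span, and produce the Pauli partner by flipping the sign of each $f_{2n}$, using disjointness of supports (and of the supports of the Fourier transforms) for the modulus identities. The only cosmetic differences are the missing $\tfrac1{\sqrt2}$ normalization and that you establish linear independence by equating coefficients in the orthogonal family $\{f_m\}$ rather than by noting that $g$ and $h$ are orthogonal vectors of equal norm; both are fine.
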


\begin{proof} By Theorem~\ref{th1}, there is a sequence $\{f_n\}_{n\in\N}$ in $L^2(\R)$ such that $\|f_n\|=1$ for each $n\in\N$ and $f_kf_m=\widehat{f}_k\widehat{f}_m=0$ whenever $k\neq m$. For each $n\in\N$, set $\phi_n=\frac1{\sqrt 2}(f_{2n-1}+f_{2n})$ and $\psi_n=\frac1{\sqrt 2}(f_{2n-1}-f_{2n})$. Then $\{\phi_n\}\cup \{\psi_n\}$ is an orthonormal sequence in $L^2(\R)$. Let $L$ be the closed linear span of $\{\phi_n:n\in\N\}$. Clearly, $L$ is a closed infinite dimensional subspace of $L^2(\R)$. It remains to verify that each non-zero $f\in L$ is a Pauli state. 

Let $f\in L\setminus\{0\}$. Since $\{\phi_n\}$ is an orthonormal basis in $L$, $f=\sum\limits_{n=1}^\infty c_n\phi_n$, where $c_n=\langle f,\phi_n\rangle$. Since $\{\psi_n\}$ is an orthonormal sequence in $L^2(\R)$ and $\sum\limits_{n=1}^\infty |c_n|^2<\infty$, the formula $g=\sum\limits_{n=1}^\infty c_n\psi_n$ defines a function in $L^2(\R)$. Since $\{\phi_n\}\cup \{\psi_n\}$ is orthonormal, $f$ and $g$ are orthogonal vectors in $L^2(\R)$ of the same positive norm and therefore $f$ and $g$ are linearly independent. Now the definitions of $f$ and $g$ together with the disjointness of the supports of $f_k$ and of $\widehat{f}_k$ yield
$$
|f|=|g|=\frac1{\sqrt 2}\sum_{n=1}^\infty |c_n|(|f_{2n-1}|+|f_{2n}|)\ \ \text{and}\ \ 
|\widehat f|=|\widehat g|=\frac1{\sqrt 2}\sum_{n=1}^\infty |c_n|(|\widehat{f}_{2n-1}|+|\widehat{f}_{2n}|).
$$
Hence $(f,g)$ is a Pauli pair and therefore $f$ is a Pauli state. 
\end{proof}

We were unable to prove the full fledged analog of Theorem~\ref{th1} for periodic functions. Still, the following result holds. As usual, we identify the functions $f\in L^2[0,2\pi]$ with $2\pi$-periodic functions on $\R$ whose restriction to $[0,2\pi]$ is square integrable. For $f\in L^2[0,2\pi]$, the Fourier coefficients are given by
$$
\widetilde{f}(n)=\frac1{2\pi}\int_{0}^{2\pi} f(t)\,e^{-int}\,dt,\ \ \ n\in\Z.
$$
We say that a set $S\subset L^2[0,2\pi]$ is an {\it UZD-set} if the cardinality of $S$ is at least 2, each $f\in S$ is a non-zero element of $L^2[0,2\pi]$ and $fg=0$ and $\widetilde{f}(n)\widetilde{g}(n)=0$ for all $n\in\Z$ for every distinct $f,g\in S$.

\begin{theorem}\label{th2} For every integer $n\geq2$, there is an $n$-element UZD-set $S\subset L^2[0,2\pi]$ consisting of $2\pi$-periodic infinitely differentiable functions.
\end{theorem}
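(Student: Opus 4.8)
The plan is to build the $n$ functions from a single smooth bump by combining three operations: dilation by $n$, which confines the Fourier support to $n\Z$; modulation by an \emph{integer} frequency, which moves that support into a prescribed residue class modulo $n$ and hence makes the Fourier supports of different functions disjoint; and translation, which separates the ordinary supports while only multiplying the Fourier coefficients by unimodular constants, leaving the Fourier supports intact. All three operations keep us inside the $2\pi$-periodic $C^\infty$ functions (note that a dilate $\phi(n\,\cdot)$ of a $2\pi$-periodic function is $\tfrac{2\pi}{n}$-periodic). Concretely, fix $\epsilon\in\bigl(0,\tfrac{\pi}{n}\bigr)$, choose a non-zero $2\pi$-periodic $\phi\in C^\infty(\R)$ with $\phi(t)=0$ for $t\in[\epsilon,2\pi-\epsilon]$ (a standard cut-off), put $a_j=\tfrac{2\pi j}{n^2}$ for $j=0,1,\dots,n-1$, and define
$$
f_j(x)=e^{ijx}\,\phi\bigl(n(x-a_j)\bigr),\qquad S=\{f_0,\dots,f_{n-1}\}.
$$
Each $f_j$ is a non-zero $2\pi$-periodic $C^\infty$ function, so it remains to verify the two disjointness conditions.

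For the Fourier side, writing $\phi(t)=\sum_{m\in\Z}\widetilde\phi(m)e^{imt}$ and substituting $t=n(x-a_j)$ gives $f_j(x)=\sum_{m\in\Z}\widetilde\phi(m)e^{-imna_j}e^{i(j+mn)x}$, so $\widetilde{f_j}(k)=0$ whenever $k\not\equiv j\pmod n$. Since $0,1,\dots,n-1$ are pairwise incongruent modulo $n$, for distinct $j,j'$ and every $k\in\Z$ at least one of $\widetilde{f_j}(k),\widetilde{f_{j'}}(k)$ vanishes, i.e.\ $\widetilde{f_j}(k)\,\widetilde{f_{j'}}(k)=0$.

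For the ordinary supports, the hypothesis on $\phi$ gives $\supp f_j\subseteq a_j+\tfrac{2\pi}{n}\Z+\bigl[-\tfrac{\epsilon}{n},\tfrac{\epsilon}{n}\bigr]$, a $\tfrac{2\pi}{n}$-periodic family of arcs of half-width $\tfrac{\epsilon}{n}$. Hence $f_jf_{j'}\neq0$ would force the distance (in $\R/2\pi\Z$) from $a_j-a_{j'}$ to the subgroup $\tfrac{2\pi}{n}\Z$ to be at most $\tfrac{2\epsilon}{n}$; but $a_j-a_{j'}=\tfrac{2\pi(j-j')}{n^2}$ with $1\leq|j-j'|\leq n-1$, so that distance equals $\tfrac{2\pi}{n^2}\min\{|j-j'|,\,n-|j-j'|\}\geq\tfrac{2\pi}{n^2}>\tfrac{2\epsilon}{n}$, a contradiction. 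Thus $f_jf_{j'}=0$ for all distinct $j,j'$ (in particular the $f_j$ are distinct), and $S$ is an $n$-element UZD-set of the desired form.

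The argument is elementary; the only delicate point is the bookkeeping that makes the three operations compatible — the modulation must have integer frequency (otherwise $2\pi$-periodicity is lost and the Fourier support leaves the lattice), and the translations $a_j$ together with the bump width $\epsilon$ must be tuned (here $a_j=\tfrac{2\pi j}{n^2}$, $\epsilon<\tfrac{\pi}{n}$) so that the $n$ periodic arc-patterns stay pairwise disjoint on the circle. This also indicates why the method stops short of an infinite UZD-set as in Theorem~\ref{th1}: it works by partitioning $\Z$ into the $n$ residue classes modulo $n$, and $\Z$ cannot be split into infinitely many such progressions while keeping the associated arc-patterns disjoint inside a fixed circle.
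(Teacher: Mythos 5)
Your argument is correct, and it proves Theorem~\ref{th2} by a route genuinely different from the paper's. The paper multiplies $\frac{2\pi}{2^n}$-periodic smooth bumps $a_1,\dots,a_n$, supported in pairwise disjoint intervals of one period, by the Rademacher functions $r_0,\dots,r_{n-1}$; Lemma~\ref{rade} then places the spectrum of the $j$-th function inside $2^{j-1}\Z\setminus 2^{j}\Z$ for $j\geq2$ (inside $2^n\Z$ for $j=1$), and these subsets of $\Z$ are pairwise disjoint, while the spatial supports are disjoint by the choice of the $a_j$. You obtain the same two-sided disjointness from a single bump: dilation by $n$ confines the spectrum to $n\Z$, the integer modulations $e^{ijx}$ move it into the $n$ distinct residue classes modulo $n$ (so $\widetilde{f_j}(k)\widetilde{f_{j'}}(k)=0$ for $j\neq j'$), and the translations $a_j=\frac{2\pi j}{n^2}$ with $\epsilon<\frac{\pi}{n}$ keep the $\frac{2\pi}{n}$-periodic arc patterns pairwise disjoint; your estimate $\frac{2\pi}{n^2}\min\{|j-j'|,\,n-|j-j'|\}\geq\frac{2\pi}{n^2}>\frac{2\epsilon}{n}$ is exactly what is needed, and you rightly insist on an integer dilation factor and integer modulation frequencies so as to remain $2\pi$-periodic with spectrum on $\Z$. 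So both proofs rest on pairing disjoint spectral sets with disjoint spatial supports, but the partitions of the frequency lattice differ: residue classes mod $n$ for you, dyadic divisibility classes realized by Rademacher multipliers for the paper. Your construction is arguably more symmetric and dispenses with the auxiliary lemma; the paper's has the minor feature that its functions can be taken real-valued. Your closing heuristic about the obstruction to an infinite UZD-set is no part of the proof and is stated a bit loosely ($\Z$ can in fact be partitioned into infinitely many arithmetic progressions, necessarily with unbounded moduli, so the real obstruction in your scheme is the single dilation factor rather than the impossibility of such a partition), but this does not affect the validity of the argument, and the infinite case is left open in the paper in any event.
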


The question whether there is an infinite UZD-set in $L^2[0,2\pi]$ remains open.

Note that the supports of $f$ and $\widehat f$ for any member $f$ of a UZD-set in $L^2(\R)$ are necessarily unbounded. This happens because the Fourier transform of a function (or a distribution for what it matters) with bounded support extends to an entire function on the complex plane. Hence the support of every member of a Pauli pair provided by Theorem~\ref{th1} is unbounded. By Remark~\ref{rr1}, the Pauli pairs provided by Proposition~I must also consist of functions with unbounded supports regardless whether we take the pair $(f_a,f_b)$ or the pair $(\widehat{f}_a,\widehat{f}_b)$. One might think that there are no Pauli pairs in $L^2(\R)$ of functions with bounded supports. This is however not the case. In the notation of Proposition~MP, the
Pauli pair $(f_1,f_2)$ may easily consist of functions with bounded supports. Moreover, the supports of $\widehat{f}_1$ and $\widehat{f}_2$ may also happen to be bounded (not simultaneously with the supports of $f_1$ and $f_2$, of course). Moroz has asked the author whether a bounded support Pauli pair $(f,g)$ in $L^2(\R)$ must only arise from Propositions~MP either as $(f_1,f_2)$ or as $(\widehat{f}_1,\widehat{f}_2)$. The reason to conjecture such a thing is that the boundedness of the support of $f\in L^2(\R)$ puts a serious constrain on the behavior of the Fourier transform $\widehat f$. Namely it forces $\widehat f$ to be an entire function of exponential type and it is not very common for two distinct entire functions to have the same absolute value on the real axis.
We answer the question of Moroz negatively by means of an example of a Pauli pair consisting of two step functions.
Before presenting the example, we would like to point out the constraints on a Pauli pair imposed by Proposition~MP.
First, observe that for a Pauli pair $(f_1,f_2)$ provided by Proposition~MP, the graph of $|f_1|=|f_2|$ has a vertical line of symmetry. Furthermore, there are $a,b\in\R$ such that $\widehat{f}_2(x)=e^{i(ax+b)}\overline{\widehat{f}_1(x)}$ on $\R$. This allows us to introduce the following concept.

\begin{definition}\label{mpp} Let $(f,g)$ be a Pauli pair of functions in $L^2(\R)$. We say that $(f,g)$ is an {\it MP$^1$-pair} if the graph of $|f|=|g|$ has a vertical line of symmetry (that is $|f(a-x)|\equiv|f(x)|$ for some $a\in\R$). We say that $(f,g)$ is an {\it MP$^2$-pair} if there are $a,b\in\R$ such that $g(x)=e^{i(ax+b)}\overline{f(x)}$ on $\R$.
\end{definition}

The point of the above definition is that due to the preceding remark, a Pauli pair $(f,g)$ can be obtained by means of using Proposition~MP only if it is either an MP$^1$-pair or an MP$^2$-pair. We shall answer the question of Moroz by providing a Pauli pair of (bounded support) step functions with 4 steps, which is neither an MP$^1$-pair nor an MP$^2$-pair.

For $c\in\C^n$, we define the step function $h_c\in L^2(\R)$ to be constant $0$ on $(-\infty,0)$ and on $[n,\infty)$ and constant $c_j$ on $[j-1,j)$ for $1\leq j\leq n$:
$$
h_c(x)=\left\{\begin{array}{ll}0&\text{if $x\notin [0,n);$}\\ c_j&\text{if $j-1\leq x<j$ \ $(1\leq j\leq n)$.}
\end{array}\right.
$$

\begin{remark}\label{rererere} Applying Definition~\ref{mpp} to a pair of step functions, we immediately observe the following. Assume that $b,c\in\C^n$ and that $(h_b,h_c)$ is a Pauli pair of step functions with $b_1c_1b_nc_n\neq0$. Then $(h_b,h_c)$ is an MP$^1$-pair if and only if
\begin{equation}\label{mp1}
|b_j|=|b_{n+1-j}|\ \ \text{for}\ \ 1\leq j\leq n.
\end{equation}
Furthermore, $(h_b,h_c)$ is an MP$^2$-pair if and only if
\begin{equation}\label{mp2}
c_j=w\overline{b_j}\ \ \text{for}\ \ 1\leq j\leq n\ \ \text{for some $w\in\T$ independent on $j$.}
\end{equation}
\end{remark}

The above remark provides easy means to confirm that a Pauli pair of step functions is not given by Proposition~MP.

\begin{example}\label{ex1} Let $f,g:\R\to\C$ be defined by the formula
$$
f(x)=\left\{\begin{array}{ll}0&\text{if $x\notin [0,4);$}\\ 1&\text{if $0\leq x<1;$}\\
\frac32 e^{4\pi i/3}&\text{if $1\leq x<2;$}\\ 3e^{4\pi i/3}&\text{if $2\leq x<3;$}\\ -2&\text{if $3\leq x<4$.}
\end{array}\right.\qquad\text{and}\qquad g(x)=\left\{\begin{array}{ll}0&\text{if $x\notin [0,4);$}\\ 1&\text{if $0\leq x<1;$}\\
-\frac32&\text{if $1\leq x<2;$}\\ 3&\text{if $2\leq x<3;$}\\ -2&\text{if $3\leq x<4$.}
\end{array}\right.
$$
Then $(f,g)$ is a Pauli pair.
\end{example}

Remark~\ref{rererere} easily ensures that the Pauli pair in the above example does not come from Propositions~MP. Thus Example~\ref{ex1} answers the mentioned question of Moroz negatively. It also provides a Pauli pair, which can not be obtained by any previously known construction.

We prove Theorems~\ref{th1} and~\ref{th2} in Section~2. We study the Pauli pairs of step functions and justify Example~\ref{ex1} in Section~3. Section~4 is devoted to concluding remarks and open questions.

\section{Proof of Theorem~\ref{th1}}

For functions $a,b\in L^2(\R)$ such that $b$ is supported on $[0,1]$ and $a$ is supported on $[0,2\pi]$ we consider the function
\begin{equation}
f_{a,b}\in L^2(\R),\qquad f_{a,b}(x)=\sum_{k=-\infty}^\infty \widehat a(k)b(x+k).
\label{fab}
\end{equation}
Since $\frac{\widehat{a}(k)}{2\pi}=\widetilde{a}(k)$ for $k\in\Z$, $\sum\limits_{k=-\infty}^\infty |\widehat{a}(k)|^2<\infty$, which in view of the disjointness of the supports of the functions $x\mapsto b(x+k)$ guarantees the convergence in $L^2(\R)$ of the series in the above display together with the identity $\|f_{a,b}\|=2\pi\|b\|\|a\|$. A direct computation yields
\begin{equation}
\widehat{f}_{a,b}(y)=\widehat{b}(y)\sum_{k=-\infty}^\infty \widehat a(k)e^{iky}=2\pi\widehat{b}(y) a_{\rm per}(y),
\label{fabf}
\end{equation}
where $a_{\rm per}$ is the $2\pi$-periodic function on $\R$, which coincides with $a$ on $[0,2\pi)$: $a_{\rm per}(y)=\sum\limits_{k=-\infty}^\infty a(y-2\pi k)$.

The formulas (\ref{fab}) and (\ref{fabf}) immediately imply that
\begin{equation}
a_1a_2=0\ \Longrightarrow\ \widehat{f}_{a_1,b_1}\widehat{f}_{a_2,b_2}=0\ \ \text{and}\ \
b_1b_2=0\ \Longrightarrow\ f_{a_1,b_1}f_{a_2,b_2}=0.
\label{zd}
\end{equation}

As usual, by the symbol $s(\Z)$, we denote the space of complex sequences $x=\{x_n\}_{n\in\Z}$ such that for every $k\in\N$, $p_k(x)=\sup\{|x_n|(1+|n|)^k:n\in\Z\}<\infty$. The space $s(\Z)$ is known as the space of rapidly decreasing (bilateral) sequences. It is well-known and easy to see that a $2\pi$-periodic integrable on $[0,2\pi]$ function is infinitely differentiable if and only if its sequence of Fourier coefficients belongs to $s(\Z)$. This fact together with (\ref{fab}) implies (via an easy estimate) that $f_{a,b}\in S(\R)$ if and only if both $a$ and $b$ are infinitely differentiable.

{\bf We are ready to prove Theorem~\ref{th1}.} \ Pick a sequence $\{I_n\}_{n\in\N}$ of pairwise disjoint closed subintervals of $(0,1)$ and a sequence $\{J_n\}_{n\in\N}$ of pairwise disjoint closed subintervals of $(0,2\pi)$. For each $n\in\N$ pick non-zero $a_n,b_n\in C^\infty(\R)$ such that the support of $a_n$ is contained in $J_n$ and the support of $b_n$ is contained in $I_n$. By the last observation, each $f_n=f_{a_n,b_n}$ is a non-zero element of $S(\R)$. Since the supports of $a_n$ are pairwise disjoint, $a_na_m=0$ whenever $n\neq m$. Similarly, $b_nb_m=0$ whenever $n\neq m$. By (\ref{zd}), $f_nf_m=0$ and $\widehat{f}_n\widehat{f}_m=0$ whenever $n\neq m$. Thus $\{f_n:n\in\N\}$ is a countable infinite UZD-set in $L^2(\R)$ consisting of elements of $S(\R)$. The proof of Theorem~\ref{th1} is complete.

\section{Proof of Theorem~\ref{th2}}

Recall that every $x\in[0,1)$, which is not a binary rational number has a unique infinite binary presentation:
$$
x=\sum_{n=1}^\infty \frac{x_n}{2^n},\ \ \text{where $x_n\in\{0,1\}$ are the binary digits of $x$.}
$$
Let $r_0$ be the constant $1$ function on $\R$. For $n\in\N$, we define the function $r_n:\R\to\R$ by
$$
r_n(x)=\left\{\begin{array}{ll}0&\text{if $x$ is binary rational;}\\ 1&\text{if $\{x\}_n=0$;}\\
-1&\text{if $\{x\}_n=1$,}
\end{array}\right.\qquad\text{where $\{x\}\in [0,1)$ is the fractional part of $x$.}
$$
The functions $r_n$ for $n\geq 0$ are known as the Rademacher functions. Obviously, each $r_n$ is $1$-periodic.

\begin{lemma}\label{rade} Let $n\in\N$, $a:\R\to\C$ be a $\frac{2\pi}{2^n}$-periodic function such that $a\in L^2\bigl[0,\frac{2\pi}{2^n}\bigr]$. For $0\leq j\leq n-1$, we define $a^{[j]}:\R\to \C$ by the formula
\begin{equation}\label{ann}
a^{[j]}(x)=a(x)r_j\Bigl(\frac{x}{2\pi}\Bigr).
\end{equation}
Then each $a^{[j]}$ is $2\pi$-periodic and belongs to $L^2[0,2\pi]$. Furthermore the Fourier coefficients of $a^{[j]}$ satisfy the following conditions$:$
\begin{align}
&\widetilde{a^{[0]}}(k)=0\ \ \text{if $2^n$ does not divide $k;$}\label{k0}
\\
&\widetilde{a^{[j]}}(k)=0\ \ \text{if $1\leq j\leq n-1$ and either $2^j$ does not divide $k$ or $2^{j+1}$ divides $k$.}\label{kj}
\end{align}
\end{lemma}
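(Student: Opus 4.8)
The plan is to read off the Fourier coefficients $\widetilde{a^{[j]}}(k)$ from the convolution formula for the Fourier coefficients of a product. The two inputs are the Fourier support of $a$ and the Fourier expansion of the square wave $r_j(\cdot/2\pi)$, and the vanishing assertions then fall out of an elementary $2$-adic divisibility count.

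The routine parts come first. Each $r_j$ is $1$-periodic, so $x\mapsto r_j(x/2\pi)$ is $2\pi$-periodic; and $a$, being $\frac{2\pi}{2^n}$-periodic, is in particular $2\pi$-periodic, so $a^{[j]}$ is $2\pi$-periodic. Since $|r_j|\leq1$ we have $|a^{[j]}|\leq|a|$ pointwise, and $a\in L^2\bigl[0,\tfrac{2\pi}{2^n}\bigr]$ together with periodicity gives $a\in L^2[0,2\pi]$, whence $a^{[j]}\in L^2[0,2\pi]$. A $\frac{2\pi}{2^n}$-periodic $L^2$-function $b$ satisfies $\widetilde b(k)=0$ for $2^n\nmid k$: translating the integration variable by $\tfrac{2\pi}{2^n}$ and using periodicity gives $\widetilde b(k)=e^{2\pi ik/2^n}\widetilde b(k)$, which forces $\widetilde b(k)=0$ unless $2^n\mid k$. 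Applied to $b=a^{[0]}=a$ (as $r_0\equiv1$), this is precisely \eqref{k0}.

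Now fix $1\leq j\leq n-1$ and put $g_j:=r_j(\cdot/2\pi)$. The one genuine computation is the Fourier expansion of $g_j$, which is a square wave: a direct evaluation of $\tfrac1{2\pi}\int_0^{2\pi}g_j(x)e^{-ikx}\,dx$ --- equivalently, the facts that $g_j$ is $\tfrac{2\pi}{2^j}$-periodic, which confines $\widetilde{g_j}$ to multiples of $2^j$, and that $g_j$ reverses sign under a shift by half a period, which then kills the even multiples --- shows that $\widetilde{g_j}(m)=0$ unless $v_2(m)=j$ (here $v_2$ is the $2$-adic valuation), i.e.\ unless $m$ is an odd multiple of $2^j$. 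Because $g_j\in L^\infty\subset L^2[0,2\pi]$ and $a\in L^2[0,2\pi]$, Cauchy--Schwarz makes the convolution
$$
\widetilde{a^{[j]}}(k)=\sum_{m\in\Z}\widetilde a(k-m)\,\widetilde{g_j}(m)
$$
absolutely convergent, and it computes the Fourier coefficients of $a^{[j]}$.

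It remains to count valuations. A nonzero term in the last sum forces $2^n\mid(k-m)$ and $v_2(m)=j$; since $j\leq n-1<n$, the integers $k-m$ and $m$ then have distinct $2$-adic valuations, so $v_2(k)=v_2\bigl((k-m)+m\bigr)=\min\bigl(v_2(k-m),v_2(m)\bigr)=j$. Hence $\widetilde{a^{[j]}}(k)=0$ unless $v_2(k)=j$, that is, unless $2^j\mid k$ and $2^{j+1}\nmid k$ --- which is exactly \eqref{kj}. I anticipate no serious obstacle: the only substantive step is the Fourier expansion of the square wave $g_j$, and the single point demanding care is the inequality $j<n$, which is what makes $k-m$ and $m$ carry different $2$-adic valuations, so that $v_2$ of their sum is the smaller of the two.
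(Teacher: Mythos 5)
Your overall route is genuinely different from the paper's and is viable in principle: the paper works directly with the symmetries of $a^{[j]}$ (a periodicity claim confining its Fourier support to one dyadic class, plus orthogonality to all $\frac{2\pi}{2^{j+1}}$-periodic functions), whereas you factor through the product--convolution formula for Fourier coefficients, the Fourier expansion of the Rademacher square wave, and a $2$-adic valuation count. The routine steps in your write-up (periodicity and $L^2$ membership of $a^{[j]}$, the case $j=0$, absolute convergence of the convolution via Cauchy--Schwarz, and the ultrametric step $v_2(k)=\min\bigl(v_2(k-m),v_2(m)\bigr)$ when the two valuations differ) are all correct.

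The one substantive input, however, is computed with the wrong index relative to the paper's definition of $r_j$. The paper defines $r_j$ through the $j$-th binary digit of the fractional part, digits being indexed from $1$; hence $r_j$ has period $2^{1-j}$ and satisfies $r_j(y+2^{-j})=-r_j(y)$. Consequently $g_j=r_j(\cdot/2\pi)$ is $\frac{2\pi}{2^{j-1}}$-periodic and reverses sign under the shift $\frac{2\pi}{2^j}$, so $\widetilde{g_j}$ is supported on the odd multiples of $2^{j-1}$, i.e.\ on $\{m:v_2(m)=j-1\}$, not on $\{m:v_2(m)=j\}$ as you claim; your claimed period $\frac{2\pi}{2^j}$ corresponds to the other common convention $r_j=\mathrm{sign}\,\sin(2^{j+1}\pi\,\cdot)$. (Check $j=1$: the paper's $r_1(\cdot/2\pi)$ is the basic $2\pi$-periodic square wave, whose nonzero coefficients sit at the odd integers.) With the correct support, your valuation count gives $\widetilde{a^{[j]}}(k)=0$ unless $v_2(k)=j-1$, which is (\ref{kj}) shifted by one. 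To be fair, the same off-by-one is present in the paper itself: its proof asserts that $a^{[j]}$ is $\frac{2\pi}{2^j}$-periodic, whereas the stated definition gives $a^{[j]}(x+\frac{2\pi}{2^j})=-a^{[j]}(x)$, and (\ref{kj}) as printed fails for $a\equiv1$, $n=2$, $j=1$; the discrepancy is harmless for Theorem~\ref{th2}, because the classes $\{v_2(k)=0\},\dots,\{v_2(k)=n-2\}$ together with $2^n\Z$ are still pairwise disjoint. So your machinery does prove the correctly re-indexed lemma; you should recompute the period of $r_j$ from the definition actually used in the paper (equivalently, replace $r_j$ by $r_{j+1}$ in (\ref{ann})) and state explicitly which Rademacher convention you are using.
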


\begin{proof} Since $a\in L^2[0,2\pi]$ and $|r_j|\leq 1$, each $a^{[j]}$ belongs to $L^2[0,2\pi]$. Since $a$ is $2\pi$-periodic and each $r_j$ is $1$-periodic, each $a^{[j]}$ is $2\pi$-periodic. Next, since $a^{[0]}=a$ is $\frac{2\pi}{2^n}$-periodic, (\ref{k0}) immediately follows. From now on we assume that $1\leq j\leq n-1$. Using the definition of the Rademacher function $r_j$ and the $\frac{2\pi}{2^n}$-periodicity of $a$, it is easy to show that $a^{[j]}$ is $\frac{2\pi}{2^j}$-periodic and is orthogonal to the subspace of $L^2[0,2\pi]$ of all $\frac{2\pi}{2^{j+1}}$-periodic functions. This observation immediately implies (\ref{kj}).
\end{proof}

{\bf We are ready to prove Theorem~\ref{th2}.} \ Let $n\geq 2$ be an integer. Pick $n$ pairwise disjoint closed subintervals $I_1,\dots,I_n$ of $\bigl(0,\frac{2\pi}{2^n}\bigr)$. Now we can choose $\frac{2\pi}{2^n}$-periodic non-zero infinitely differentiable functions $a_1,\dots,a_n$ on $\R$ such that for each $j$, the support of $a_j$ restricted to $\bigl(0,\frac{2\pi}{2^n}\bigr)$ is contained in $I_j$.

Now for $1\leq j\leq n$, define $f_j=a_j^{[j-1]}$ following the notation introduced in (\ref{ann}). Then each $f_j$ is a non-zero $2\pi$-periodic infinitely differentiable function on $\R$ (the discontinuities of $r_j\bigl(\frac{2\pi}{x}\bigr)$ do not matter since they happen outside the supports of $a_k$). The restrictions we have imposed on the supports of $a_j$ imply that $f_jf_k=0$ whenever $j\neq k$. Finally, Lemma~\ref{rade} implies that $\widetilde {f_j}(m)\widetilde {f_k}(m)=0$ for every $m\in\Z$ whenever $j\neq k$. Thus $\{f_1,\dots,f_n\}$ is an $n$-element UZD-subset of $L^2[0,2\pi]$ consisting of $2\pi$-periodic infinitely differentiable functions. The proof of Theorem~\ref{th2} is complete.

\section{Pauli pairs of step functions}

We start with a number of general observations. We assume that for $b\in\C^n$ and $h_b$ is the  corresponding step function. That is, $h_b$ vanishes outside $[0,n)$, while for $1\leq j\leq n$, $h_b$  restricted to $[j-1,j)$ is the constant function $b_j$. In other words $h_b(x)=\sum\limits_{j=0}^{n-1}b_{j+1}\chi(x-j)$, where $\chi$ is the indicator function of the interval $[0,1)$. Then
$$
|\widehat{h}_b(y)|=|\widehat\chi(y)|\Bigl|\sum_{j=1}^{n}b_{j}e^{ijy}\Bigr|\ \ \text{for all}\ \ y\in\R.
$$
Hence
$$
|\widehat{h}_b(y)|^2=|\widehat\chi(y)|^2\sum_{j,k=1}^{n}b_{j}\overline{b_{k}}e^{i(j-k)y}=
|\widehat\chi(y)|^2\sum_{k=1-n}^{n-1}\rho_k(b)e^{iky}\ \ \text{for all}\ \ y\in\R,
$$
where
$$
\rho_k(b)=\sum_{j=1}^{n-k}b_{k+j}\overline{b_j}\ \ \text{if $0\leq k\leq n-1$ and}\ \
\rho_k(b)=\overline{\rho_{-k}(b)}\ \ \text{if $1-n\leq k\leq-1$}.
$$
Since two trigonometric polynomials coincide as functions if and only if their coefficients are the same, from the last two displays it follows that $|\widehat{h}_b|=|\widehat{h}_c|$ if and only if $\rho_k(b)=\rho_k(c)$ for $0\leq k\leq n$. Obviously $|h_b|=|h_c|$ if and only if $|b_j|=|c_j|$ for $1\leq j\leq n$. Finally, if $|b_j|=|c_j|$ for $1\leq j\leq n$, then automatically $\rho_0(b)=\rho_0(c)$. This is a straightforward consequence of the Parseval identity. These observations are summarized in the following lemma.

\begin{lemma}\label{stepa} The equalities $|h_b|=|h_c|$ and $|\widehat{h}_b|=|\widehat{h}_c|$ hold precisely when
\begin{align}
&b_j\overline{b_j}=c_j\overline{c_j}\ \ \text{for $1\leq j\leq n$ and}
\label{haha1}
\\
&
\sum_{j=1}^{n-k}b_{k+j}\overline{b_j}=\sum_{j=1}^{n-k}c_{k+j}\overline{c_j}\ \ \text{for $1\leq k\leq n-1$}.
\label{haha2}
\end{align}
\end{lemma}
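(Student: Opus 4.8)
The plan is to make precise the two computations carried out just above the statement, reading each of them as an equivalence rather than a one-way implication. For the first equality, since $h_b$ equals the constant $b_j$ on $[j-1,j)$ and vanishes off $[0,n)$, the functions $|h_b|$ and $|h_c|$ are step functions with the same step partition, so $|h_b|=|h_c|$ holds (a.e.) if and only if $|b_j|=|c_j|$ for every $j$, that is, if and only if (\ref{haha1}) holds. This part needs no further work.

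For the Fourier side I would start from the identity $|\widehat{h}_b(y)|^2=|\widehat\chi(y)|^2\sum_{k=1-n}^{n-1}\rho_k(b)e^{iky}$ obtained above by squaring $|\widehat{h}_b(y)|=|\widehat\chi(y)|\,\bigl|\sum_{j=1}^{n}b_je^{ijy}\bigr|$ and regrouping the resulting double sum by the value of $j-k$; here $\rho_k$ is as defined above and $\rho_{-k}(b)=\overline{\rho_k(b)}$. The point that deserves a remark is that $\widehat\chi$ is an entire function vanishing only on the discrete set $2\pi\Z\setminus\{0\}$, hence $\widehat\chi\ne0$ almost everywhere; therefore $|\widehat{h}_b|=|\widehat{h}_c|$ a.e.\ if and only if the trigonometric polynomials $\sum_k\rho_k(b)e^{iky}$ and $\sum_k\rho_k(c)e^{iky}$ agree a.e.\ on $\R$. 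Since a trigonometric polynomial vanishing a.e.\ has all its coefficients equal to zero (linear independence of the characters $y\mapsto e^{iky}$), this is equivalent to $\rho_k(b)=\rho_k(c)$ for every $k\in\Z$, and by the Hermitian symmetry $\rho_{-k}=\overline{\rho_k}$ it suffices to require this for $0\le k\le n-1$.

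Finally I would reconcile the two lists of conditions by noting that $\rho_0(b)=\sum_{j=1}^{n}|b_j|^2$ by the Parseval identity, so the equality $\rho_0(b)=\rho_0(c)$ is an automatic consequence of (\ref{haha1}); hence, in the presence of (\ref{haha1}), demanding $\rho_k(b)=\rho_k(c)$ for $0\le k\le n-1$ is the same as demanding it only for $1\le k\le n-1$, which is exactly (\ref{haha2}). Combining the two halves yields that $|h_b|=|h_c|$ and $|\widehat{h}_b|=|\widehat{h}_c|$ hold simultaneously precisely when (\ref{haha1}) and (\ref{haha2}) both hold. I do not expect a genuine obstacle here: the mathematical content is the ``linear independence of characters'' step, and the only items needing a sentence of justification are the a.e.\ non-vanishing of $\widehat\chi$ (so that it may be cancelled) and keeping the bookkeeping consistent with the Fourier normalization $\widehat f(x)=\int_{\R}e^{-i\langle x,y\rangle}f(y)\,dy$ fixed in the introduction.
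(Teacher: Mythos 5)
Your proposal is correct and follows essentially the same route as the paper, which derives the lemma from the identity $|\widehat{h}_b(y)|^2=|\widehat\chi(y)|^2\sum_{k=1-n}^{n-1}\rho_k(b)e^{iky}$, the fact that trigonometric polynomials with equal values have equal coefficients, and the observation that $\rho_0(b)=\rho_0(c)$ follows automatically from (\ref{haha1}). Your added remarks on the a.e.\ non-vanishing of $\widehat\chi$ (so it can be cancelled) and on passing between a.e.\ equality and coefficient equality merely make explicit what the paper leaves implicit.
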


Thus the task of finding Pauli pairs of step functions boils down to solving a system, of homogeneous degree 2 algebraic equations. Using the above lemma one can verify that for the number of steps $n\leq 3$, every Pauli pair of step functions is actually provided by Proposition~MP. We shall concentrate on the case $n=4$: the first one when the phenomenon we are looking for is possible. Even in the relatively mild specific case $n=4$, the number of variables in the system (\ref{haha1}) and (\ref{haha2}) is a bit overwhelming ($14$ real quadratic equations on $16$ real variables). Still it is possible to completely characterize the Pauli pairs of step functions with 4 steps. For the sake of clarity, a preliminary reduction is in order. Since we are interested in the genuinely 4-step situation, we may assume that $b_1\neq 0$ and $b_4\neq 0$. Since multiplying both $h_b$ and $h_c$ by non-zero complex numbers with the same absolute value does not perturb the equations $|h_b|=|h_c|$ and $|\widehat{h}_b|=|\widehat{h}_c|$, it is enough to consider the case $b_1=c_1=1$ (the general solution is easily obtained from this particular case).

\begin{proposition}\label{4steps} Let $b,c\in\C^4$ be such that $b_1=c_1=1$ and $b_4c_4\neq0$. Then the complete list of solutions of the system $|h_b|=|h_c|$ and $|\widehat{h}_b|=|\widehat{h}_c|$ is described as follows$:$
\begin{itemize}
\item[{\rm(1)}] The trivial solution$:$
\begin{equation}\label{sol1}
\text{$b=c=(1,x,y,z)$ with $x,y,z\in\C$, $z\neq0$.}
\end{equation}
This is a six $($real$)$ parametric family of solutions.
\item[{\rm(2)}] The $4$-parametric family of solutions$:$
\begin{equation}\label{sol2}
\text{$b=(1,pe^{i\phi},pe^{i\psi},e^{i\theta})$, $c=(1,pe^{i(\theta-\phi)},pe^{i(\theta-\psi)},e^{i\theta})$, where $p,\phi,\psi,\theta\in\R.$}
\end{equation}
\item[{\rm(3)}] And another $4$-parametric family of solutions$:$
\begin{align}
&\textstyle \!\!\!\!\!\!\!\!\!\!\!\!\!\!\!\!b{=}\Bigl(1,\!\frac{(r^2{-}1)\sin\psi}{r\sin(\psi{-}\phi)}e^{i(\theta{+}\phi)},\!
\frac{(r^2{-}1)\sin\phi}{\sin(\psi{-}\phi)}e^{i(2\theta{+}\psi)},re^{3i\theta}\Bigr),\
c{=}\Bigl(1,\!\frac{(r^2{-}1)\sin\psi}{r\sin(\psi{-}\phi)}e^{i(\theta{-}\phi)},\!
\frac{(r^2{-}1)\sin\phi}{\sin(\psi{-}\phi)}e^{i(2\theta{-}\psi)},re^{3i\theta}\Bigr),\label{sol3}
\\
&\qquad\text{where $r,\phi,\psi,\theta\in\R$ and $r\cdot\sin\phi\cdot\sin\psi\cdot\sin(\psi-\phi)\neq 0$}.\notag
\end{align}
\end{itemize}
\end{proposition}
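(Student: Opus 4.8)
The plan is to translate the system, via Lemma~\ref{stepa}, into an explicit set of quadratic equations in the entries of $b$ and $c$, and then solve it by elementary means, exploiting the normalization $b_1=c_1=1$ to cut down the number of unknowns. For $n=4$, condition \eqref{haha1} reads $|b_j|=|c_j|$ for $j=2,3,4$, while \eqref{haha2} amounts to the three equations $\rho_k(b)=\rho_k(c)$, $k=1,2,3$. The equation for $k=3$ is $b_4=c_4$, so we may put $w:=b_4=c_4$, which is non-zero by hypothesis; the equation for $k=2$ reads $b_3-c_3=w(\overline{c_2}-\overline{b_2})$; and using the conjugate of this last relation to eliminate the terms in $\overline{b_3},\overline{c_3}$ from the $k=1$ equation, the latter collapses to $b_3\overline{b_2}-c_3\overline{c_2}=(|w|^2-1)(b_2-c_2)$. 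Thus the whole system is equivalent to $|b_j|=|c_j|$ ($j=2,3,4$) together with these two relations.

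Next I would dispose of the degenerate cases. If $b_2=0$, then $c_2=0$ by \eqref{haha1}, and the two relations force $b_3=c_3$, hence $b=c$; if $b_3=0$ (and $b_2\neq0$), then $c_3=0$ and the relations force $b_2=c_2$, again $b=c$. Either way we land in family~\eqref{sol1}, and conversely $b=c$ with arbitrary entries and $b_4\neq0$ always solves the system. So from now on $b_2,b_3,c_2,c_3$ are all non-zero, and I would pass to polar form: $b_2=pe^{i\phi_1}$, $c_2=pe^{i\phi_2}$, $b_3=qe^{i\psi_1}$, $c_3=qe^{i\psi_2}$ with $p,q>0$, and $w=re^{i\gamma}$ with $r>0$. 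Introducing the half-sum/half-difference angles $\Phi=\tfrac{1}{2}(\phi_1+\phi_2)$, $\phi=\tfrac{1}{2}(\phi_1-\phi_2)$ and likewise $\Psi,\psi$, and using $e^{i\sigma}-e^{i\tau}=2i\sin\tfrac{\sigma-\tau}{2}\,e^{i(\sigma+\tau)/2}$, the two relations become
\[ q\sin\psi\,e^{i\Psi}=rp\sin\phi\,e^{i(\gamma-\Phi)}\qquad\text{and}\qquad q\sin(\psi-\phi)\,e^{i\Psi}=(r^2-1)\sin\phi\,e^{2i\Phi}. \]

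From the first relation, $b=c$ is equivalent to $\sin\phi=0$, equivalently to $\sin\psi=0$, so I may assume $\sin\phi\neq0\neq\sin\psi$. Then each relation equates a non-zero real multiple of $e^{i\Psi}$ to a non-zero real multiple of $e^{i(\gamma-\Phi)}$, resp.\ of $e^{2i\Phi}$; hence $e^{i(\Phi+\Psi-\gamma)}=\pm1$ and $e^{i(\Psi-2\Phi)}=\pm1$, and in particular $e^{i(3\Phi-\gamma)}\in\R$. Dividing the two relations gives $\dfrac{\sin(\psi-\phi)}{\sin\psi}=\dfrac{r^2-1}{rp}\,e^{i(3\Phi-\gamma)}$, a real number. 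Now the argument bifurcates. If $r=1$, the second relation is $b_3\overline{b_2}=c_3\overline{c_2}$, which forces $\psi\equiv\phi\pmod{\pi}$ and then, via the first relation, $p=q$; after absorbing the finitely many residual branch and sign choices into the parameters, $(b,c)$ is precisely of the form~\eqref{sol2} (which, since $|b_1|=|b_4|$ and $|b_2|=|b_3|$, is an MP$^1$-pair, and indeed arises from Proposition~MP). If $r\neq1$, then $\sin(\psi-\phi)\neq0$ --- otherwise the displayed ratio would force $r^2=1$ --- so taking absolute values in that ratio and in the first relation pins $p$ and $q$ down to the values appearing in~\eqref{sol3}, while the phases are fixed by $\Psi\equiv2\Phi$ and $\gamma\equiv3\Phi\pmod{\pi}$ up to two signs; setting $r=|w|$, $\theta=\Phi$ and keeping $\phi,\psi$ as the half-differences recovers~\eqref{sol3}, the side condition $r\cdot\sin\phi\cdot\sin\psi\cdot\sin(\psi-\phi)\neq0$ being exactly what $p,q>0$ and solvability demand. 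To finish, I would check the converse: each tuple in (1), (2), (3) satisfies \eqref{haha1}--\eqref{haha2} --- immediate for (1), a short computation (or the MP$^1$ observation) for (2), and a direct substitution into the two reduced relations for (3).

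The routine parts are the initial reduction and the three converse verifications. The genuine obstacle is the last step of the forward direction: one must show that the various indeterminacies --- the $\pmod{\pi}$ ambiguities in $\Phi,\Psi,\gamma$ introduced when passing to half-angles, and the sign choices coming from ``$q\sin\psi$ and $rp\sin\phi$ have equal or opposite sign'' --- can all be reabsorbed into a reparametrization of the families~\eqref{sol2} and~\eqref{sol3}, so that no spurious extra solutions appear and the list (1)--(3) is genuinely exhaustive. This bookkeeping is elementary but is precisely where care is needed.
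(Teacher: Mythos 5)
Your reduction is correct (I checked that the $k=2$ equation $b_3-c_3=w(\overline{c_2}-\overline{b_2})$ together with the collapsed $k=1$ equation $b_3\overline{b_2}-c_3\overline{c_2}=(|w|^2-1)(b_2-c_2)$ is indeed equivalent, given \eqref{haha1}, to the full system), and your route differs from the paper's in technique: after the common reduction via Lemma~\ref{stepa}, the paper writes $b_2=ps$, $c_2=pt$, $b_3=qu$, $c_3=qv$, $b_4=c_4=rx$ with \emph{signed} real $p,q,r$ and unimodular $s,t,u,v,x$, and extracts the phase constraints from a collinearity lemma (for $\alpha,\beta,\gamma,\delta\in\T$, the points $0$, $\alpha-\beta$, $\frac1\gamma-\frac1\delta$ are collinear iff $\alpha=\beta$, $\gamma=\delta$ or $\alpha\beta\gamma\delta=1$), which yields $stuv=x^2$ and then the dichotomy $ut=vs$ (family (2)) or $uv=(st)^2$ (family (3), via a short linear system in $p,q$); you instead use positive moduli and the half-angle factorization, arriving at the same two relations and the same dichotomy ($r=1$ versus $r\neq1$). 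Both are workable, but the paper's signed $p,q,r$ absorb automatically the $\pm$ ambiguities that your normalization $p,q,r>0$ creates. A small ordering slip: the conclusions $e^{i(\Psi-2\Phi)}=\pm1$ and $e^{i(3\Phi-\gamma)}\in\R$ presuppose $r\neq1$ (for $r=1$ your second relation is $0=0$), so the bifurcation should come before those assertions.

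The genuine gap is the step you yourself flag and then skip: verifying that the derived constraints reproduce exactly the printed families and, conversely, that every member of (2) and (3) satisfies \eqref{haha1}--\eqref{haha2}. This is not dispensable bookkeeping. If you carry out the reconstruction in your $r=1$ branch ($r=1$, $p=q$, $\psi\equiv\phi$ and $\Phi+\Psi\equiv\gamma$ modulo $\pi$), you obtain $b=(1,pe^{i\phi'},pe^{i\psi'},e^{i\theta})$ paired with $c=(1,pe^{i(\theta-\psi')},pe^{i(\theta-\phi')},e^{i\theta})$, i.e.\ the reversal that couples the phase of $c_2$ with that of $b_3$ (this is the MP$^1$ family, and it is also what the paper's own Case~2a substitution $t=x/u$, $v=x/s$ produces), which is \emph{not} literally \eqref{sol2}, where the exponents $\theta-\phi$ and $\theta-\psi$ occupy the opposite slots. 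In particular, your proposed converse check for family (2) ``by direct substitution or the MP$^1$ observation'' does not go through for \eqref{sol2} as printed: $b=(1,1,i,1)$, $c=(1,1,-i,1)$ belongs to that family, yet $\rho_2(b)=1+i\neq 1-i=\rho_2(c)$, so $|\widehat h_b|\neq|\widehat h_c|$. Thus the omitted matching is exactly where the proposition is decided, and carrying it out would have surfaced this discrepancy. The same care is needed in your $r\neq1$ branch: one must check that the mod-$\pi$ ambiguities in $\Phi,\Psi,\gamma$ and the residual sign choices can be absorbed using the simultaneous shifts $(\Psi,\psi)\mapsto(\Psi+\pi,\psi+\pi)$ (and likewise for $(\Phi,\phi)$), which is precisely the role played by the signed coefficients in the paper's parametrization \eqref{sol3}; until that is done, exhaustiveness of the list (1)--(3) has not been established.
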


The proof of the above proposition is elementary but tedious and for that reason it is banned to the Appendix. Note that we did not make the three families of solutions disjoint. Namely, some of the solutions in (\ref{sol2}) as well as in (\ref{sol3}) are actually trivial (=feature in (\ref{sol1})). Example~\ref{ex1} is obtained by plugging $r=2$, $\phi=\theta=\frac{\pi}3$ and $\psi=\frac{2\pi}3$ into (\ref{sol3}). Note that the MP$^1$-pairs in the above proposition are all collected in (\ref{sol2}), while the MP$^2$ pairs are given by (\ref{sol3}) with the additional constraint $e^{2i\theta}=1$.

\section{Concluding remarks and open problems}

The first problem is obvious.

\begin{question}\label{qq1}
Describe Pauli pairs of step functions with arbitrary number of steps.
\end{question}

Let us call a set $P\subset L^2(\R)$ a {\it Pauli set} if the cardinality of $P$ is at least 2 and every distinct
elements $f$ and $g$ of $P$ form a Pauli pair. We have already observed that Theorem~\ref{th1} provides a huge Pauli set. The author has a proof of the fact that a Pauli set of step functions with $n$ steps has at most $2^n$ elements. Furthermore any Pauli set of functions with bounded supports is totally disconnected. The proofs are not included since I strongly believe both statements are miles away from optimal. For instance, I think that a Pauli set of functions with bounded supports must be discrete (if not finite).

\begin{question}\label{qq2}
Is it true that every Pauli set of functions with bounded supports is discrete $($and hence countable$)$?
\end{question}

\begin{question}\label{qq3}
What exactly is the biggest cardinality of a Pauli set of step functions with $n$ steps?
\end{question}

\bigskip

{\bf Acknowledgements.} \ The author is grateful to the referee for useful comments and suggestions, which helped to improve the presentation.

\small

\section{Appendix: Proof of Proposition~\ref{4steps}}

By Lemma~\ref{stepa}, the equalities $|h_b|=|h_c|$ and $|\widehat{h}_b|=|\widehat{h}_c|$ hold if and only if
(\ref{haha1}) and (\ref{haha2}) are satisfied. In our specific case $n=4$ and $b_1=c_1=1$ the system (\ref{haha1}) and (\ref{haha2}) can be rewritten as:
\begin{equation}\label{haha3}
\text{$|b_2|=|c_2|$, $|b_3|=|c_3|$, $b_4=c_4$, $b_4\overline{b_2}+b_3=c_4\overline{c_2}+c_3$,
$b_4\overline{b_3}+b_3\overline{b_2}+b_2=c_4\overline{c_3}+c_3\overline{c_2}+c_2$.}
\end{equation}
In view of $|b_2|=|c_2|$, $|b_3|=|c_3|$ and $b_4=c_4$, we can write $b_2=ps$, $c_2=pt$, $b_3=qu$, $c_3=qv$ and $b_4=c_4=rx$, where $p,q,r\in\R$, $s,t,u,v,x\in\T$ and $r\neq 0$. In this new notation the first three equations in (\ref{haha3}) are satisfied automatically and (\ref{haha3}) becomes equivalent to
\begin{align}
&\textstyle rp\bigl(\frac{x}{s}-\frac{x}{t}\bigr)+q(u-v)=0,\label{re1}
\\
&\textstyle rq\bigl(\frac{x}{u}-\frac{x}{v}\bigr)+pq\bigl(\frac{u}{s}-\frac{v}{t}\bigr)+p(s-t)=0.\label{re2}
\end{align}

We start by getting rid of an easy degenerate case.

{\bf Case 1:} $pq(s-t)(u-v)=0$. The system (\ref{re1}) and (\ref{re2}) ensures that in this case at least one of the following statements is satisfied:
\begin{itemize}
\item $p=q=0$;
\item $p=0$ and $u=v$;
\item $q=0$ and $s=t$;
\item $s=t$ and $u=v$.
\end{itemize}
In each of these four cases, we have $b=c$. Thus in Case~1 we only have (a particular case of) the trivial solution (\ref{sol1}).

The further analysis of (\ref{re1}) and (\ref{re2}) relies upon the following elementary fact, the proof of which is left as an exercise to the reader.

\begin{lemma}\label{cheva}
For $\alpha,\beta,\gamma,\delta\in\T$, the complex numbers $0$, $\alpha-\beta$ and $\frac1\gamma-\frac1\delta$ lie on one line if and only if either $\alpha=\beta$ or $\gamma=\delta$ or $\alpha\beta\gamma\delta=1$.
\end{lemma}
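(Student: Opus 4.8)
The plan is to convert the geometric collinearity condition into a single polynomial identity and then factor it. Recall the elementary criterion: three points $0$, $z$, $w$ in $\C$ lie on one line if and only if $z\overline{w}\in\R$; phrased this way the criterion automatically subsumes the degenerate cases $z=0$ and $w=0$ (then $z\overline w=0$ is real), which is exactly what we want, since the alternatives $\alpha=\beta$ and $\gamma=\delta$ are supposed to be allowed. So I would apply this with $z=\alpha-\beta$ and $w=\frac1\gamma-\frac1\delta$.

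The key point is that all four numbers lie on $\T$. Hence $\overline\gamma=\frac1\gamma$ and $\overline\delta=\frac1\delta$, so $w=\overline\gamma-\overline\delta=\overline{\gamma-\delta}$ and therefore $\overline w=\gamma-\delta$. Consequently $z\overline w=(\alpha-\beta)(\gamma-\delta)$, and the collinearity of $0$, $\alpha-\beta$ and $\frac1\gamma-\frac1\delta$ is precisely the statement that this product is real, i.e.\ that $(\alpha-\beta)(\gamma-\delta)=\overline{(\alpha-\beta)(\gamma-\delta)}$.

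Now I would compute the conjugate the same way, using $|\alpha|=|\beta|=1$ as well: $\overline{(\alpha-\beta)(\gamma-\delta)}=\bigl(\tfrac1\alpha-\tfrac1\beta\bigr)\bigl(\tfrac1\gamma-\tfrac1\delta\bigr)=\frac{(\beta-\alpha)(\delta-\gamma)}{\alpha\beta\gamma\delta}=\frac{(\alpha-\beta)(\gamma-\delta)}{\alpha\beta\gamma\delta}$, the two sign changes cancelling. Thus the collinearity condition reads $(\alpha-\beta)(\gamma-\delta)=\frac{(\alpha-\beta)(\gamma-\delta)}{\alpha\beta\gamma\delta}$, and multiplying through by $\alpha\beta\gamma\delta\neq0$ this is equivalent to $(\alpha-\beta)(\gamma-\delta)(\alpha\beta\gamma\delta-1)=0$. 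A product of complex numbers vanishes iff one of the factors does, so this holds exactly when $\alpha=\beta$, or $\gamma=\delta$, or $\alpha\beta\gamma\delta=1$; and since every step is an equivalence, the converse implications come for free from the same factored identity.

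There is essentially no obstacle in this argument — it is a short computation. The only point requiring a little care is to state the collinearity criterion in the robust form ``$z\overline w\in\R$'' so that the cases $\alpha=\beta$ and $\gamma=\delta$ are handled uniformly rather than by separate bookkeeping.
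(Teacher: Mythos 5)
Your proof is correct. Note that the paper itself gives no argument for this lemma --- it is explicitly left as an exercise to the reader --- so there is nothing to compare against; your computation is precisely the intended kind of verification. The key simplifications are sound: stating collinearity of $0$, $z$, $w$ as $z\overline{w}\in\R$ does absorb the degenerate cases uniformly, the identity $\overline{w}=\gamma-\delta$ for $w=\frac1\gamma-\frac1\delta$ with $\gamma,\delta\in\T$ is right, and the factorization $(\alpha-\beta)(\gamma-\delta)(\alpha\beta\gamma\delta-1)=0$ obtained by conjugating with $\frac1\alpha=\overline\alpha$, $\frac1\beta=\overline\beta$ yields exactly the stated trichotomy, with every step an equivalence.
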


{\bf Case 2:} $pq(s-t)(u-v)\neq0$. In this case (\ref{re1}) guarantees that $0$, $u-v$ and $\frac{x}{s}-\frac{x}{t}$ lie on the same line. Since $u\neq v$ and $s\neq t$, Lemma~\ref{cheva} implies that $uv\frac{s}{x}\frac{t}{x}=1$. That is, 
\begin{equation}
\label{xsq}
stuv=x^2.
\end{equation}
By the same Lemma~\ref{cheva}, (\ref{xsq}) guarantees that $0$, $s-t$ and $\frac{x}{u}-\frac{x}{v}$ lie on the same line $L$ in $\C$. Thus by (\ref{re2}), $\frac{u}{s}-\frac{v}{t}\in L$. Hence $0$, $s-t$ and $\frac{u}{s}-\frac{v}{t}$ lie on the same line. Since $s\neq t$, Lemma~\ref{cheva} implies that either $\frac{u}{s}=\frac{v}{t}$ or $st\frac{s}{u}\frac{t}{v}=1$. Thus at least one of the following two conditions must be satisfied:
$$
\text{$ut=vs$\ \ or\ \ $uv=(st)^2$.}
$$
Thus Case~2 splits into two subcases.

{\bf Case 2a:} $stuv=x^2$ and $ut=vs$. Hence $ut=vs=\pm x$. Since replacing $(r,x)$ by $(-r,-x)$ does not actually change $b$  and $c$, we can assume that $r>0$. Easy cancellation shows that (\ref{re1}) and (\ref{re2}) can be rewritten as $q=rp$ and $p=rq$ if $ut=vs=x$ and as $q=-rp$ and $p=-rq$ if $ut=vs=-x$.
In the first case we have $r=1$ and $p=q$, while in the second case we have $r=1$ and $q=-p$.
Since $s,u,t\in\T$, we can write $s=e^{i\phi}$, $u=e^{i\psi}$ and $x=e^{i\theta}$ for $\phi,\psi,\theta\in\R$. Then  $v=\frac{x}{s}=e^{i(\theta-\phi)}$ and $t=\frac{x}{u}=e^{i(\theta-\psi)}$. If $ut=vs=x$, we have $r=1$ and $q=p$ and therefore $b=(1,pe^{i\phi},pe^{i\psi},e^{i\theta})$ and $c=(1,pe^{i(\theta-\phi)},pe^{i(\theta-\psi)},e^{i\theta})$, which is exactly the family (\ref{sol2}) of solutions. If $ut=vs=-x$, we have $r=1$ and $q=-p$ and therefore $b=(1,pe^{i\phi},-pe^{i\psi},e^{i\theta})$ and $c=(1,pe^{i(\theta-\phi)},-pe^{i(\theta-\psi)},e^{i\theta})$. The change of parametrization $(p,\phi,\psi,\theta)\mapsto (p,\phi,\pi+\psi,\theta)$ shows that this family of solutions is the same old (\ref{sol2}). It remains to consider the following case.

{\bf Case 2b:} $stuv=x^2$, $sv\neq ut$ and $uv=(st)^2$. It follows that $(st)^3=x^2$. Hence there exists $w\in\T$ such that $x=w^3$ and $st=w^2$. Denote $\alpha=\frac{s}{w}$ and $\beta=\frac{u}{w^2}$. In this notation $s=\alpha w$, $t=\frac{w}{\alpha}$, $u=\beta w^2$ and $v=\frac{w^2}{\beta}$. Substituting these expressions into the system (\ref{re1}) and (\ref{re2}), we can rewrite it in the following equivalent way:
\begin{align}
&rp(\alpha-\overline{\alpha})+q(\beta-\overline{\beta})=0,\label{re3}
\\ 
& \textstyle p(\alpha-\overline{\alpha})+pq\bigl(\frac{\beta}{\alpha}-\frac{\overline{\beta}}{\overline{\alpha}}\bigr)-
rq(\beta-\overline{\beta})=0.\label{re4}
\end{align}
Since $w,\alpha,\beta\in\T$, we can write $\alpha=e^{i\phi}$, $\beta=e^{i\psi}$ and $w=e^{i\theta}$ for some $\phi,\psi,\theta\in\R$. The relations $s\neq t$, $u\neq v$ and $sv\neq ut$ are equivalent to $\sin\phi\neq0$, $\sin\psi\neq 0$ and $\sin(\psi-\phi)\neq 0$ respectively. The equations (\ref{re3}) and (\ref{re4}) now read:
\begin{align}
&rp\sin\phi+q\sin\psi=0,\label{re5}
\\
&p\sin\phi+pq\sin(\psi-\phi)-rq\sin\psi=0.\label{re6}
\end{align}
Using the relations $\sin\phi\neq0$, $\sin\psi\neq 0$ and $\sin(\psi-\phi)\neq 0$, we can easily solve this system. Namely, (\ref{re5}) and (\ref{re6}) are equivalent to
$$
p=\frac{(r^2-1)\sin\psi}{r\sin(\psi-\phi)}\ \ \text{and}\ \ q=\frac{(r^2-1)\sin\phi}{\sin(\psi-\phi)}.
$$
Recovering $b$ and $c$ (for instance $b_4=c_4=rx=rw^3=re^{3i\theta}$, etc.), we arrive to the family (\ref{sol3}) of solutions.

The proof of Proposition~\ref{4steps} is complete.

\vfill\break

\small\rm

\vskip1truecm

\scshape

\noindent Stanislav Shkarin

\noindent Queens's University Belfast

\noindent Pure Mathematics Research Centre

\noindent University road, Belfast, BT7 1NN, UK

\noindent E-mail address: \qquad {\tt s.shkarin@qub.ac.uk}

\end{document}